\newtheorem{theorem}{Theorem}[section]
\newtheorem{proposition}[theorem]{Proposition}
\newcommand{\mysetminusD}{\hbox{\tikz{\draw[line width=0.6pt,line cap=round] (3pt,0) -- (0,6pt);}}}
\newcommand{\mysetminusT}{\mysetminusD}
\newcommand{\mysetminusS}{\hbox{\tikz{\draw[line width=0.45pt,line cap=round] (2pt,0) -- (0,4pt);}}}
\newcommand{\mysetminusSS}{\hbox{\tikz{\draw[line width=0.4pt,line cap=round] (1.5pt,0) -- (0,3pt);}}}
\newcommand{\mysetminus}{\mathbin{\mathchoice{\mysetminusD}{\mysetminusT}{\mysetminusS}{\mysetminusSS}}}
\renewcommand{\phi}{\varphi}
\newcommand{\defin}[1]{\textbf{#1}}
\newcommand{\lthen}{\rightarrow}
\newcommand{\val}[1]{[\![ #1 ]\!]}
\newcommand{\commentout}[1]{}
\renewcommand{\L}{\mathcal{L}}
\renewcommand{\implies}{\Rightarrow}
\newcommand{\p}{\bar{p}}
\newcommand{\Bel}{\mathsf{Bel}}
\newcommand{\Belp}{\Bel^{\pi}}
\title{Uncertain Interpretation}
\author{Adam Bjorndahl}
\begin{document}


\begin{center}
{\LARGE A Logic of Uncertain Interpretation}\\
\vspace{5mm}
{\large Adam Bjorndahl}\\
Carnegie Mellon University
\end{center}

\begin{abstract}
We introduce a logical framework for reasoning about ``uncertain interpretations'' and investigate two key applications: a new semantics for implication capturing a kind of ``meaning entailment'', and a conservative notion of ``evidentially supported'' belief that takes the form of a Dempster-Shafer belief function.
\end{abstract}

\section{Introduction}

We do not always know how to interpret the statements that we hear, the observations that we make, or the evidence that we gather. Traditional frameworks for reasoning about uncertainty and belief revision typically suppose that new information is presented definitively: there is no question about \textit{what} was learned. The paradigm of \emph{Bayesian conditioning} exemplifies this assumption: ``evidence'' takes the simple form of an event $E$, and belief revision proceeds by updating probabilities accordingly: $\pi \mapsto \pi(\cdot \, | \, E)$.

In order to capture the kind of uncertainty about interpretation we wish to reason about, we change the fundamental representation of events so that the sets they correspond to are themselves variable---the ``true meaning'' of a statement thus becomes itself an object of uncertainty. This approach follows in the spirit of other recent work \cite{BO19,B20}, expanding on it along two key dimensions.

First, the scope of meaning variability is broadened to construct an entire logic of statements of uncertain interpretation. This paves the way for a new semantics of \textit{implication} capturing a kind of ``meaning entailment'' we do not get from the standard material conditional. The second and more major expansion incorporates probabilistic representations of uncertainty. In this context we show how variable interpretations of evidence together with our new type of implication lead naturally to a conservative notion of ``evidentially supported'' belief that takes the form of a Dempster-Shafer belief function. We conclude with some remarks about evidence combination in this framework that suggest promising directions for continuing research.

\section{Foundations}

We begin, as usual, with a nonempty set $X$ of \emph{states} or \emph{possible worlds}, intuitively representing all the different ways things might be. An \emph{event} is a subset $E \subseteq X$, corresponding to the assertion that the world is \textit{that} way (i.e., it is one of the worlds in $E$). Accordingly, it is natural (and standard) to interpret a statement $\phi$ (drawn from some formal language) as corresponding to an event $\val{\phi} \subseteq X$, namely, the event that $\phi$ is true.

The language of classical propositional logic furnishes a simple and familiar example. Let $\textsc{prop}$ be a countable collection of \emph{primitive propositions}, and let $\L$ denote the basic propositional language given by
$$\phi ::= p \, | \, \lnot \phi \, | \, \phi \land \psi,$$
where $p \in \textsc{prop}$ (with the other Boolean connectives treated as abbreviations in the usual way). Then, given a nonempty set $X$ and a \emph{valuation} $v \colon \textsc{prop} \to 2^{X}$, we recursively extend this valuation to all formulas in $\L$ via
\begin{eqnarray*}
\val{p} & = & v(p)\\
\val{\lnot \phi} & = & X \mysetminus \val{\phi}\\
\val{\phi \land \psi} & = & \val{\phi} \cap \val{\psi}.
\end{eqnarray*}
Thus the primitive propositions are interpreted directly as events (via the valuation), and this is extended to all Boolean connectives in the standard way, interpreting negation as complementation and conjunction as intersection. Of course, the possible worlds framework is intended to represent not just truth and falsity but also belief and uncertainty---for example, by assigning probabilities to events in $X$---and we will get to this business in Section \ref{sec:bel}. But first we introduce the twist that is the basis for the present work.

\subsection{Variable valuations} \label{sec:var}

We generalize the interpretation of primitive propositions by defining a \defin{variable valuation} $V$ to be a function on \textsc{prop} that outputs not subsets of $X$ but rather functions from $X$ to subsets of $X$. Formally, for each $p \in \textsc{prop}$, we have $V(p) \colon X \to 2^{X}$. To reduce notational clutter, we omit $V$ altogether whenever it is clear from context, writing $p(x)$ instead of $V(p)(x)$.

The fundamental aim here is to capture \textit{uncertainty about the meaning of propositions}; moreover, we wish to encode this uncertainty using the same formal mechanism that encodes all uncertainty, namely, variation across the state space $X$. For instance, at $x \in X$ we can say that the ``true meaning'' or ``correct interpretation'' of $p$ is the event $p(x) \subseteq X$, whereas at $y \in X$ the true meaning of $p$ is instead $p(y) \subseteq X$, and so on. As a consequence, an agent who considers both $x$ and $y$ possible (e.g., assigning to each a positive probability) would correspondingly consider both $p(x)$ and $p(y)$ to be possible interpretations of the (syntactic) statement $p$.

Naturally, we recover the standard notion of a valuation for a primitive proposition $p \in \textsc{prop}$ whenever the function $V(p)$ is constant, say $p(x) = U$ for all $x \in X$ (hencforth we denote this by $p(x) \equiv U$). In this case, intuitively, there is only one correct interpretation of $p$, namely $U$. So the present framework subsumes the standard one in the obvious way---identifying classical events (i.e., subsets of the state space) with constant functions.

While of course we do not wish to restrict attention to constant functions, we will often find it convenient to impose a weaker assumption. To explain this assumption it is helpful to first interrogate the concept of \textit{truth} in a framework with variable valuations: at which worlds should we say that $p$ is actually true? The intuition here is relatively straightforward: $p$ is true at $x$ just in case the correct interpretation of $p$ at $x$, namely $p(x)$, includes $x$ itself. Thus we define the \defin{truth set} of $p$ as follows:
$$\val{p} = \{x \in X \: : \: x \in p(x)\}.$$
Note that if $p(x) \equiv U$ then $\val{p} = U$.

Now we state the aforementioned assumption: say that $p \colon X \to 2^{X}$ is \defin{coherent} if, for all $x,y \in X$, $y \in p(x)$ implies $y \in p(y)$. Equivalently, for all $x \in X$, $p(x) \subseteq \val{p}$. Intuitively, this says that at all worlds $x$, the correct interpretation of $p$ entails that $p$ is true. It is easy to see that every constant function is coherent, but not conversely. Note also that when $p$ is coherent,
$$\val{p} = \bigcup_{x \in X} p(x),$$
that is, the truth set of $p$ is just the set of all worlds consistent with \textit{some} interpretation of $p$.

It is useful to observe that we can transform any incoherent variable  valuation into a strictly stronger, coherent one with the same truth set.

\begin{proposition}\label{pro:coh}
Let $p \colon X \to 2^{X}$, and define $\bar{p} \colon X \to 2^{X}$ by setting $\bar{p}(x) = p(x) \cap \val{p}$ for all $x \in X$. Then $\bar{p}$ is coherent and $\val{p} = \val{\bar{p}}$.
\end{proposition}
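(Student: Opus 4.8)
The plan is a direct verification from the definitions; I do not expect any genuine obstacle here, so the ``main difficulty'' is really just the mild bookkeeping point flagged at the end. There are two things to check: that $\bar{p}$ is coherent, and that $\val{\bar{p}} = \val{p}$. I would do them in that order, though either order works.

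For coherence, I would argue directly from the definition rather than via the ``$\bar{p}(x) \subseteq \val{\bar{p}}$'' reformulation (since that reformulation presupposes knowing $\val{\bar{p}}$). Fix $x, y \in X$ and suppose $y \in \bar{p}(x) = p(x) \cap \val{p}$. Then in particular $y \in \val{p}$, which by the definition of the truth set means precisely $y \in p(y)$. Combining $y \in p(y)$ with $y \in \val{p}$ yields $y \in p(y) \cap \val{p} = \bar{p}(y)$, which is exactly what coherence requires. (Alternatively: $\bar{p}(x) = p(x) \cap \val{p} \subseteq \val{p}$ trivially, and once the second claim below is in hand this says $\bar{p}(x) \subseteq \val{\bar{p}}$, matching the equivalent formulation of coherence stated in the text.)

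For the equality of truth sets, I would just unwind: $x \in \val{\bar{p}}$ iff $x \in \bar{p}(x) = p(x) \cap \val{p}$, i.e., iff $x \in p(x)$ \emph{and} $x \in \val{p}$. The key observation is that the first conjunct $x \in p(x)$ is, by the very definition of $\val{p}$, equivalent to the second conjunct $x \in \val{p}$; hence the conjunction collapses to $x \in \val{p}$ alone, giving $\val{\bar{p}} = \val{p}$. Finally I would note in passing that $\bar{p}(x) = p(x) \cap \val{p} \subseteq p(x)$ for every $x$, so $\bar{p}$ is indeed a strengthening of $p$, as advertised in the surrounding discussion.

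The one spot that calls for a sliver of care is precisely that collapse in the second step: one is tempted to carry the intersection through, but the point is that membership ``$x \in p(x)$'' is not extra information beyond ``$x \in \val{p}$'' --- it is the same condition --- which is why restricting each $p(x)$ to $\val{p}$ leaves the truth set untouched. Everything else is routine set manipulation.
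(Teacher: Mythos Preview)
Your proof is correct and is essentially the same direct verification as the paper's; the only cosmetic difference is that the paper reverses the order---establishing $\val{p} = \val{\bar{p}}$ first and then deducing coherence via the equivalent formulation $\bar{p}(x) \subseteq \val{p} = \val{\bar{p}}$---which is exactly the alternative route you parenthetically note.
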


\begin{proof}
We begin with the latter claim. First suppose that $x \in \val{p}$; this means that $x \in p(x)$. Then clearly $x \in p(x) \cap \val{p} = \bar{p}(x)$, so $x \in \val{\bar{p}}$. Conversely, suppose that $x \in \val{\bar{p}}$; then $x \in \bar{p}(x) \subseteq \val{p}$.

Having established this, we can see directly that $\bar{p}(x) \subseteq \val{p} = \val{\bar{p}}$, hence $\bar{p}$ is coherent.
\end{proof}

A variable valuation can be extended to all formulas in $\L$ in a straightforward, pointwise manner. After all, if the correct interpretation of $p$ at $x$ is $p(x)$, then presumably the correct interpretation of $\lnot p$ at $x$ is $X \mysetminus p(x)$; similarly, if the correct interpretations of $p$ and $q$ at $x$ are $p(x)$ and $q(x)$, the the correct interpretation of $p \land q$ at $x$ would be $p(x) \cap q(x)$. Thus we recursively define
\begin{eqnarray*}
p(x) & = & V(p)(x)\\
(\lnot \phi)(x) & = & X \mysetminus \phi(x)\\
(\phi \land \psi)(x) & = & \phi(x) \cap \psi(x),
\end{eqnarray*}
and extending the definition above, for all $\phi \in \L$, we define the \defin{truth set of $\phi$} to be:
$$\val{\phi} = \{x \in X \: : \: x \in \phi(x)\}.$$
The following is then an easy exercise.

\begin{proposition}
For all $\phi, \psi \in \L$, we have $\val{\lnot \phi} = X \mysetminus \val{\phi}$ and $\val{\phi \land \psi} = \val{\phi} \cap \val{\psi}$.
\end{proposition}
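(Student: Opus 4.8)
The plan is to prove both identities by directly unwinding the definitions, with no structural induction required. The point to notice at the outset is that $\phi(x)$ is already defined for \emph{every} $\phi \in \L$ via the pointwise recursion, and the truth set $\val{\phi}$ is defined directly in terms of $\phi(x)$; so to compute $\val{\lnot\phi}$ or $\val{\phi \land \psi}$ it suffices to apply the relevant recursive clause \emph{once} at the top level and then simplify the resulting membership condition. (The tempting move of inducting on formula complexity is unnecessary here, and this is really the only ``subtlety''.)

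First I would handle negation. Starting from $\val{\lnot\phi} = \{x \in X : x \in (\lnot\phi)(x)\}$, substitute the clause $(\lnot\phi)(x) = X \mysetminus \phi(x)$ to get $\{x \in X : x \in X \mysetminus \phi(x)\}$. Since $x \in X$ always, the condition $x \in X \mysetminus \phi(x)$ is equivalent to $x \notin \phi(x)$, so this set is $\{x \in X : x \notin \phi(x)\} = X \mysetminus \{x \in X : x \in \phi(x)\} = X \mysetminus \val{\phi}$, as desired.

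Next, conjunction proceeds the same way. From $\val{\phi \land \psi} = \{x \in X : x \in (\phi \land \psi)(x)\}$, substitute $(\phi \land \psi)(x) = \phi(x) \cap \psi(x)$ to obtain $\{x \in X : x \in \phi(x) \cap \psi(x)\}$. The membership condition $x \in \phi(x) \cap \psi(x)$ is equivalent to the conjunction of $x \in \phi(x)$ and $x \in \psi(x)$, so the set splits as $\{x \in X : x \in \phi(x)\} \cap \{x \in X : x \in \psi(x)\} = \val{\phi} \cap \val{\psi}$.

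I do not expect any genuine obstacle: both steps are one-line computations, and the proposition is flagged as an ``easy exercise'' precisely because the truth-set operator commutes with these two connectives essentially by construction. If anything, the only thing worth stating explicitly is the observation in the first paragraph — that the pointwise interpretation $\phi(x)$ is what carries the recursive content, leaving $\val{\cdot}$ to be read off directly.
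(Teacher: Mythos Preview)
Your proposal is correct and is exactly the direct definitional unwinding the paper has in mind; the paper in fact omits the proof entirely, calling it ``an easy exercise.'' Your remark that no induction on formula complexity is needed---because the recursion lives in the pointwise map $\phi \mapsto \phi(x)$ rather than in $\val{\cdot}$---is the right observation and the only point worth making explicit.
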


\subsection{A running example} \label{sec:exa}

A simple, running example will be useful to illustrate new concepts as we introduce them and facilitate the comparison with standard models. Suppose that your friend has flipped a fair coin that landed either heads ($H$) or tails ($T$). They can see the coin but you cannot; you await a report from them as to which way the coin landed. However, you are unsure about your friend's disposition in making this report: perhaps they will accurately tell you the way the coin landed ($accurate$), but you also consider it possible that they are bored or distracted and instead will either always say heads ($sayheads$) or always say tails ($saytails$), regardless of the true state of the coin. Naturally, we can represent the possibilities at play here using six states:
$$X = \{H,T\} \times \{accurate, sayheads, saytails\}.$$

Suppose now that your friend says ``Heads''; call this statement $p$. It is easy to see that $p$ is compatible with exactly three of the six states above, namely, the event
$$P = \{(H, accurate), (H, sayheads), (T, sayheads)\};$$
indeed, these are precisely the states in which your friend would say ``Heads''. So the standard way of associating an event to the proposition $p$ would be to set $v(p) = P$.

However, the nature of the example makes it clear that the actual meaning of $p$ is something that depends on your friend's disposition, that is, the true meaning of their utterance is something you are uncertain about. Roughly speaking, we might say that if your friend is accurate then $p$ means the coin landed heads, if your friend always says heads then $p$ doesn't tell you anything, and if your friend always says tails then $p$ is not possible at all. So, shifting to the variable valuation framework, as a first pass we might define
\begin{displaymath}
p(x,y) =
\begin{cases}
\{(H, accurate), (H, sayheads), (H, saytails)\} & \textrm{if $y = accurate$}\\
X & \textrm{if $y = sayheads$}\\
\emptyset & \textrm{if $y = saytails$}.
\end{cases}
\end{displaymath}

This is certainly more complicated than the event $P$! We will leverage this additional structure subsequently, but for the moment let us begin by observing that we can recover $P$ from this variable interpretation of $p$ quite directly: one easily checks that for all $(x,y) \in X$, $(x,y) \in p(x,y)$ iff $(x,y) \in P$, so by definition $\val{p} = P$.

On the other hand, $p$ is not coherent according to our definition. For example, $(H, saytails) \in p(H, accurate)$ even though $(H, saytails) \notin \val{p}$. And there is something a bit odd about this: intuitively, whatever the statement ``Heads'' might mean, it \textit{at least} rules out that ``Tails'' is what was said. Proposition \ref{pro:coh} gives us a natural way to make $p$ coherent without changing its truth set:
\begin{displaymath}
\bar{p}(x,y) =
\begin{cases}
\{(H, accurate), (H, sayheads)\} & \textrm{if $y = accurate$}\\
P & \textrm{if $y = sayheads$}\\
\emptyset & \textrm{if $y = saytails$},
\end{cases}
\end{displaymath}
We take this variable valuation to provide a better model of the statement in question.

\subsection{Implication through meanings} \label{sec:imp}

According to our definitions, the Boolean implication symbol $\lthen$ is simply an abbreviation; in particular, $\phi \lthen \psi$ is shorthand for $\lnot \phi \lor \psi$, and so
$$(\phi \lthen \psi)(x) = (X \mysetminus \phi(x)) \cup \psi(x).$$
This seems a perfectly respectable, pointwise rendering of the material conditional, and indeed one easily checks that $\phi \lthen \psi$ is true at $x$ just in case either $\phi$ is false at $x$ or $\psi$ is true at $x$:
$$\val{\phi \lthen \psi} = (X \mysetminus \val{\phi}) \cup \val{\psi}.$$
Notably, however, the present framework offers a very different but also quite natural way of formalizing what ``$\phi$ implies $\psi$'' could mean: we might say that $\phi$ implies $\psi$ at $x$ just in case the correct interpretation of $\phi$ at $x$ \textit{entails} the correct interpretation of $\psi$ at $x$. To capture this formally we introduce a new symbol, $\implies$, and for $\phi,\psi \in \L$ define
$$\val{\phi \implies \psi} = \{x \in X \: : \: \phi(x) \subseteq \psi(x)\}.$$
Note that here we have defined the truth set of $\phi \implies \psi$ directly, rather than associating this new expression with a function from $X$ to $2^{X}$ as we did with the formulas of $\L$.\footnote{Intuitively, although there may be uncertainy about whether or not $\phi \implies \psi$ is \textit{true}, there is no question as to what it \textit{means}; thus one could represent it with the constant function that always returns the set above.}

A concrete example will be helpful here, for which we return to our coinflip scenario. Suppose that $h$ represents the (unambiguous) statement that the coin actually landed heads---in particular, let the variable valuation for $h$ be the constant function that always returns the set $\{(H, accurate), (H, sayheads), (H, saytails)\}$. Recall that $\p$, by contrast, is the (coherent) statement ``Heads'' as made by your friend, the meaning of which is uncertain in the sense described above.

We compare the truth set of $\p \lthen h$ with that of $\p \implies h$. For the former, we have
\begin{eqnarray*}
\val{\p \lthen h} & = & (X \mysetminus \val{\p}) \cup \val{h}\\
& = & X \mysetminus \{(T, sayheads)\},
\end{eqnarray*}
since $(T, sayheads)$ is the only state where $\p$ is true and $h$ is false. On the other hand, referring to the variable valuation of $\p$, we can see that $\p(x,y) \subseteq \{(H, accurate), (H, sayheads), (H, saytails)\}$ iff $y \neq sayheads$. This corresponds to the fact that it is precisely when your friend always says ``Heads'' that their statement \textit{doesn't} entail that the coin actually landed heads. This intuitive difference is perhaps somewhat clearer if we focus on those worlds in which $\p$ actually holds:
$$\val{\p} \cap \val{\p \lthen h} = \{(H, accurate), (H, sayheads)\},$$
whereas
$$\val{\p} \cap \val{\p \implies h} = \{(H, accurate)\}.$$
At the world $(H, sayheads)$, $\p \lthen h$ is true simply because $\p$ and $h$ are both true (as usual for a material conditional). By contrast, $\p \implies h$ is false at this world because the statement $\p$ does not in itself guarantee the truth of $h$ (even though $h$ happens to be true)---in this world your friend just says ``Heads'' no matter what! At the world $(H, accurate)$ the two implication are both true, but for different reasons: $\p \lthen h$ is true, as above, because both $\p$ and $h$ are true, whereas $\p \implies h$ is true because at this world your friend is accurate and therefore their statement actually guarantees the truth of $h$.

\section{Belief} \label{sec:bel}

At a high level, we might picture an agent's ``attitude toward uncertainty'' as stemming from two basic components: first, their subjective, \emph{prior beliefs} about how the world is, given by a probability measure $\pi$ on $X$; and second, some body of \emph{total evidence} they have about the world which can be used to constrain their prior beliefs. Classically, the ``evidence'' in this second component is simply some (true) event $E \subseteq X$, and it ``constrains'' prior beliefs through Bayesian conditioning, producing the posterior beliefs $\pi(\cdot \, | \, E)$.

In the present framework we generalize our understanding of ``events'', and therefore of ``evidence'', to include variable valuations, thus replacing $E \subseteq X$ with $\psi \in \L$ in the sketch above. As we will show, the sense in which such a $\psi$ might ``constrain'' prior beliefs $\pi$ is richer than merely conditioning.

\subsection{Probability and conditioning}

A standard way of representing belief is to attach a probability measure $\pi$ to the state space $X$. In order to streamline the presentation of the novel aspects of this model, we take $X$ to be finite. For any event $A \subseteq X$, the quantity $\pi(A)$ intuitively represents the agent's degree of belief that $A$ is true. Extending this notion to formulas with variable valuations is straightforward given our definition of truth sets: for $\phi \in \L$, the agent's degree of belief in $\phi$ is simply the probability of the \textit{event that $\phi$ is true}, namely, $\pi(\val{\phi})$. When $\pi(\val{\phi}) = 1$, we say that the agent \emph{believes $\phi$}.

It is likewise standard to represent belief revision via Bayesian conditioning; specifically, if the agent learns that some event $B$ is the case, then their new degree of belief in any event $A$ is the conditional probability $\pi(A \, | \, B)$ (provided $\pi(B) \neq 0$, otherwise this is undefined). As above, a straightforward generalization to the variable valuation framework is readily available: intuitively, if we interpret ``learning $\psi$'' to mean ``learning that $\psi$ is true'', it is natural to define the agent's degree of belief in $\phi$ given $\psi$ to be $\pi(\val{\phi} \, | \, \val{\psi})$.%
\footnote{Notice that under these definitions, the variable nature of $\phi$ and $\psi$ does not come to the fore, since all the formulas involve only the corresponding truth sets $\val{\phi}$ and $\val{\psi}$ (which are events). One might, instead, propose a more ``pointwise'' notion of conditioning, for example by defining
$$\pi(\phi \, | \, \psi) = \sum_{E \subseteq X} \pi(\val{\phi} \, | \, E) \cdot \pi(\psi^{-1}(E)).$$
Loosely speaking, this formula says that to update on $\psi$ is to take a weighted average over all possible updating events $E$, weighted by the (prior) probability that $E$ is the correct interpretation of $\psi$; in other words (speaking even more loosely), to update on the ``expected interpretation'' of $\psi$. While the intuition behind a definition like this has a certain appeal, it also comes with a host of interpretational difficulties, so it is not the approach we will pursue in this paper.}
When $\pi(\val{\phi} \, | \, \val{\psi}) = 1$, we say that the agent \emph{believes $\phi$ given $\psi$}.

We return once again to our running example to illustrate these basic points. First let us imagine that you rather suspect your friend is accurate (ascribing this event, say, $60\%$ likelihood), and otherwise find the other two possibilities as to their disposition equally likely; let's also assume that you think the coin is fair. These initial beliefs are captured with the prior $\pi$ given in the following table:
\begin{table}[H]
\begin{center}
\begin{tabular}{c|c|c}
 & $H$ & $T$\\
 \hline
$accurate$ & $3/10$ & $3/10$\\
$sayheads$ & $1/10$ & $1/10$\\
$saytails$ & $1/10$ & $1/10$\\
\end{tabular}
\end{center}
\caption{The prior $\pi$} \label{tbl:con}
\end{table}%
\noindent After hearing your friend say ``Heads''---that is, upon learning $\p$, which we recall has truth set
$$\val{\p} = \{(H, accurate), (H, sayheads), (T, sayheads)\}$$
---you can revise your beliefs on the basis of this new evidence:
\begin{table}[H]
\begin{center}
\begin{tabular}{c|c|c}
 & $H$ & $T$\\
 \hline
$accurate$ & $3/5$ & $0$\\
$sayheads$ & $1/5$ & $1/5$\\
$saytails$ & $0$ & $0$\\
\end{tabular}
\end{center}
\caption{The posterior $\pi(\cdot \, | \, \val{\p})$} \label{tbl:con2}
\end{table}%

For another simple example, suppose instead that you do not suspect your friend is anything but accurate; say your prior is given by $\pi'$:
\begin{table}[H]
\begin{center}
\begin{tabular}{c|c|c}
 & $H$ & $T$\\
 \hline
$accurate$ & $1/2$ & $1/2$\\
$sayheads$ & $0$ & $0$\\
$saytails$ & $0$ & $0$\\
\end{tabular}
\end{center}
\caption{The prior $\pi'$} \label{tbl:acc}
\end{table}%
\noindent This time, upon hearing ``Heads'', you come to fully believe that the coin indeed landed heads---even though, of course, this is not entailed by $\p$ (since $(T, sayheads) \in \val{\p}$).
\begin{table}[H]
\begin{center}
\begin{tabular}{c|c|c}
 & $H$ & $T$\\
 \hline
$accurate$ & $1$ & $0$\\
$sayheads$ & $0$ & $0$\\
$saytails$ & $0$ & $0$\\
\end{tabular}
\end{center}
\caption{The posterior $\pi'(\cdot \, | \, \val{\p})$} \label{tbl:acc2}
\end{table}%

Notice that in each of these examples, the fact that $\p$ has a variable valuation plays no role in the calculations---we are simply updating on the event $\val{\p} = P$ as we would do in the classical case to obtain a posterior probability. There is, however, a different ``attitude toward uncertainty'' we can examine in this context: roughly speaking, in addition to the probability of an event \textit{conditional} on the evidence, we can ask about the conditional probability that the evidence \textit{implies} the event in question.

\subsection{Evidential support}

Fix a prior $\pi$, a body of evidence represented by some $\psi \in \L$, and an event $A \subseteq X$ that we wish to assess in light of the evidence. As previously noted, we can think of an event $A$ as the truth set of some primitive proposition $a$ whose associated variable valuation is the constant function $a(x) \equiv A$ (so $\val{a} = A$).

We have already seen one way of making such an assessment, namely, by computing the posterior probability of $A$ given $\psi$: $\pi(\val{a} \, | \, \val{\psi})$. What about the probability that $\psi$ implies $a$? If we interpret ``implies'' using the material conditional, this is not a terribly interesting prospect, for we have
$$\pi(\val{\psi \lthen a} \, | \, \val{\psi}) = \pi\big((X \mysetminus \val{\psi}) \cup \val{a} \, | \, \val{\psi}\big) = \pi(\val{a} \, | \, \val{\psi}).$$
However, if we appeal instead to our alternative notion of implication (Section \ref{sec:imp}), we obtain something quite different. In general, for any $A \subseteq X$, let us define
$$\Belp_{\psi}(A) = \pi(\val{\psi \implies a} \, | \, \val{\psi}).$$
Thus $\Belp_{\psi}(A)$ is the probability assigned (by $\pi$) to the proposition that the evidence ($\psi$) actually entails $A$, conditional on the evidence being true. For this reason we might think of $\Belp_{\psi}$ as capturing something like ``evidentially supported/justified'' belief.

Unlike with the material conditional, this is not necessarily equal to the conditional probability of $A$ given $\psi$. One easy way to see this is by considering the special case where the evidence $\psi$ is also constant, say $\psi(x) \equiv E$. In this case, by definition, we have
$$\val{\psi \implies a} = \{x \in X \: : \: \psi(x) \subseteq a(x)\} = \{x \in X \: : \: E \subseteq A\} =
\begin{cases}
X & \textrm{if $E \subseteq A$}\\
\emptyset & \textrm{otherwise},
\end{cases}
$$
so (writing $\Belp_{E}$ for $\Belp_{\psi}$ to make the constancy of $\psi$ salient) we have
$$
\Belp_{E}(A) = \begin{cases}
1 & \textrm{if $E \subseteq A$}\\
0 & \textrm{otherwise}.
\end{cases}
$$
So in this special case, $\Belp_{E}$ ignores the prior $\pi$ completely and essentially just checks whether or not the evidence entails the event in question.\footnote{An attitude that some have called ``knowledge''!}

When $\psi$ has a variable interpretation, $\Belp_{\psi}$ may take on values between $0$ and $1$, as we shall see shortly. It is not hard to show that $\Belp_{\psi}$ has some reasonable (and perhaps familiar) properties.
\begin{proposition} \label{pro:bel}
For all $A,B \subseteq X$, we have:
\begin{enumerate}[(a)]
\item
$\Belp_{\psi}(A) \leq \pi(A \, | \, \val{\psi})$.
\item
$\Belp_{\psi}(X) = \Belp_{\psi}(\val{\psi}) = 1$ and $\Belp_{\psi}(\emptyset) = 0$.
\item
If $A \subseteq B$ then $\Belp_{\psi}(A) \leq \Belp_{\psi}(B)$.
\item
If $A$ and $B$ are disjoint, then $\Belp_{\psi}(A \cup B) \geq \Belp_{\psi}(A) + \Belp_{\psi}(B)$.
\end{enumerate}
\end{proposition}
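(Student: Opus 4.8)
The plan is to funnel all four parts through a single set. For an event $A \subseteq X$, let $a$ be the constant valuation $a(x) \equiv A$ and write $S_A = \val{\psi \implies a} = \{x \in X : \psi(x) \subseteq A\}$, so that $\Belp_{\psi}(A) = \pi(S_A \mid \val{\psi})$ by definition (and $\pi(\val{\psi}) \neq 0$ is tacitly assumed, so that this conditional probability makes sense). The observation doing all the work is: if $x \in \val{\psi}$ then $x \in \psi(x)$, so in particular $\psi(x) \neq \emptyset$, and if moreover $\psi(x) \subseteq A$ then $x \in A$. Set-theoretically: $S_A \cap \val{\psi} \subseteq A$ for every $A$, and $S_{\emptyset} \cap \val{\psi} = \emptyset$. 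I would prove this first and then read off the four items.

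Part (a) is immediate: $S_A \cap \val{\psi} \subseteq A \cap \val{\psi}$, so $\pi(S_A \cap \val{\psi}) \leq \pi(A \cap \val{\psi})$, and dividing by $\pi(\val{\psi})$ gives $\Belp_{\psi}(A) \leq \pi(A \mid \val{\psi})$. For part (b): $S_X = X$ gives $\Belp_{\psi}(X) = 1$; $S_{\emptyset} \cap \val{\psi} = \emptyset$ gives $\Belp_{\psi}(\emptyset) = 0$; and $\Belp_{\psi}(\val{\psi}) = 1$ because coherence of $\psi$ says exactly that $\psi(x) \subseteq \val{\psi}$ for all $x$, i.e.\ $S_{\val{\psi}} = X$. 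Part (c) reduces to the containment $S_A \subseteq S_B$ whenever $A \subseteq B$ (if $\psi(x) \subseteq A$ and $A \subseteq B$ then $\psi(x) \subseteq B$), combined with monotonicity of conditional probability.

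Part (d) needs a touch more. I would establish two facts: $S_A \cup S_B \subseteq S_{A \cup B}$, which is clear; and, when $A \cap B = \emptyset$, $S_A \cap S_B \cap \val{\psi} = \emptyset$, because $x \in S_A \cap S_B$ forces $\psi(x) \subseteq A \cap B = \emptyset$, hence $x \notin \val{\psi}$ by the key observation. Inclusion--exclusion then yields
$$\pi\big(S_{A \cup B} \cap \val{\psi}\big) \geq \pi\big((S_A \cup S_B) \cap \val{\psi}\big) = \pi(S_A \cap \val{\psi}) + \pi(S_B \cap \val{\psi}),$$
and dividing by $\pi(\val{\psi})$ gives the claimed superadditivity. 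I do not anticipate any real obstacle: once $S_A$ is set up, everything is bookkeeping with elementary properties of conditional probability. The single point I would watch is the middle equality in (b) --- $\Belp_{\psi}(\val{\psi}) = 1$ is exactly where coherence of the evidence $\psi$ gets used, and it genuinely fails for incoherent $\psi$ (for instance the incoherent reading $p$ of ``Heads'' in the running example, where $p(H,accurate) \not\subseteq \val{p}$), so I would confirm that coherence is in force for this part; all the other items hold for an arbitrary variable valuation $\psi$.
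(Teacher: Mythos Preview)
The paper does not actually supply a proof of this proposition; it merely remarks that ``it is not hard to show'' and then moves on to the running example, so there is nothing to compare your argument against line by line. Your proof is correct and is exactly the kind of routine verification the paper is gesturing at: the device of writing $S_A = \{x : \psi(x) \subseteq A\}$ and reducing everything to the set-theoretic inclusion $S_A \cap \val{\psi} \subseteq A$ is the natural way to do all four parts, and your handling of (d) via inclusion--exclusion together with $S_A \cap S_B \cap \val{\psi} = \emptyset$ is clean.

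Your closing caveat is well placed and worth emphasizing. The paper never explicitly hypothesizes coherence of $\psi$ in the statement of the proposition, yet the equality $\Belp_{\psi}(\val{\psi}) = 1$ in part (b) genuinely needs it (or at least needs $\psi(x) \subseteq \val{\psi}$ for $\pi$-almost every $x \in \val{\psi}$). Your counterexample using the incoherent $p$ from the running example is apt: there $p(H,accurate) = \{(H,accurate),(H,sayheads),(H,saytails)\} \not\subseteq \val{p}$, so $(H,accurate) \in \val{p} \setminus S_{\val{p}}$ and $\Belp_{p}(\val{p}) < 1$ for the prior in Table~\ref{tbl:con}. So you have not only filled in the omitted proof but also located a tacit assumption the paper leaves implicit.
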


Let's return one more time to our running example, and first focus on the prior probability $\pi$ given in Table \ref{tbl:con}. Recall that $h$ is the constant function that always returns the set $\{(H, accurate), (H, sayheads), (H, saytails)\}$, that is, the event that the coin landed heads. We can see in Table \ref{tbl:con} that $\pi(\val{h}) = 1/2$, and in Table \ref{tbl:con2} that $\pi(\val{h} \, | \, \val{\p}) = 4/5$, the latter corresponding the fact that once your friend says ``Heads'', it becomes much more likely (though not certain) that the coin actually landed heads. We can also compute
$$\Belp_{\p}(\val{h}) = \pi(\val{\p \implies h} \, | \, \val{\p}) = \frac{\pi(\val{\p \implies h} \cap \val{\p})}{\pi(\val{\p})} = \frac{\pi(\{(H, accurate)\})}{\pi(\val{\p})} = 3/5.$$
Why is this lower than the value of $\pi(\val{h} \, | \, \val{\p})$? Intuitively, $\Belp_{\p}(\val{h})$ is not the likelihood that the coin merely \textit{landed} heads given your information; instead, it is the likelihood that \textit{your information entails} that the coin landed heads (conditional on having received it). Since this latter is a stronger statement, its probability may be lower. In this particular case, of the 3 worlds you consider possible once you hear your friend say ``Heads'', two of them are worlds where the coin actually landed heads, but only one of those (namely, $(H, accurate)$) is a world where what you heard actually supports that conclusion.

It is also instructive to assess the event that your friend is accurate using these two methods. Let $a(x) \equiv \{(H, accurate), (T, accurate)\}$; then it is easy to check that we have $\pi(\val{a}) = 3/5$ and $\pi(\val{a} \, | \, \val{\p}) = 3/5$, whereas
$$\Belp_{\p}(\val{a}) = \pi(\val{\p \implies a} \, | \, \val{\p}) = \frac{\pi(\val{\p \implies a} \cap \val{\p})}{\pi(\val{\p})} = \frac{\pi(\emptyset)}{\pi(\val{\p})} = 0.$$
This is because, intuitively, no world compatible with $\p$ is a world in which $\p$ entails that your friend is accurate.

Switching to the scenario captured by the prior probability $\pi'$ given in Table \ref{tbl:acc}, we find some similarities and some differences. Recall that in this case you begin with subjective certainty that your friend is accurate. We can easily see in Table \ref{tbl:acc} that $\pi(\val{h}) = 1/2$, and in Table \ref{tbl:acc2} that $\pi(\val{h} \, | \, \val{\p}) = 1$. We also have
$$\Bel^{\pi'}_{\p}(\val{h}) = \pi'(\val{\p \implies h} \, | \, \val{\p}) = \frac{\pi'(\{(H, accurate)\})}{\pi'(\val{\p})} = 1,$$
since you ascribe probability $1$ not only to the coin having landed heads, but also to your friend being accurate, and thus to your evidence (i.e., their report) entailing that the coin landed heads. Incidentally, this shows that $\Belp_{\psi}$ is not \emph{factive} in the sense that it may assign the maximal value $1$ to an event that is strictly stronger than the conditioning event $\val{\psi}$---in particular, in this case, $(T, sayheads)$ is a world compatible with your evidence but at which $h$ is false despite being assigned $1$ by $\Bel^{\pi'}_{\p}$.

We also have $\pi'(\val{a}) = 1$ and $\pi(\val{a} \, | \, \val{\p}) = 1$, yet
$$\Bel^{\pi'}_{\p}(\val{a}) = \pi'(\val{\p \implies a} \, | \, \val{\p}) = \frac{\pi'(\val{\p \implies a} \cap \val{\p})}{\pi'(\val{\p})} = \frac{\pi'(\emptyset)}{\pi'(\val{\p})} = 0;$$
as before, this is because you have no evidence whatsoever that supports the claim that your friend is actually accurate. Thus we can see that not only can $\Belp_{\psi}$ attain its upper bound of $\pi(\cdot \, | \, \val{\psi})$ (per Proposition \ref{pro:bel}(a)), but it can also fall short of this upper bound in a maximal way (i.e., the difference can be $1$).

\subsection{Dempster-Shafer}

There is a tight connection between the Dempster-Shafer theory of belief functions and the notion of ``evidentially supported beliefs'' as defined in the previous section. Indeed, our notation was chosen to highlight precisely this relationship. We refer the reader to \cite{H03} for an excellent overview of Dempster-Shafer theory; for our present purposes we focus on one central construction.

A \defin{mass function} on $X$ is a map $m \colon 2^{X} \to [0,1]$ such that $m(\emptyset) = 0$ and $\sum_{A \subseteq X} m(A) = 1$ (i.e., it is a probability measure on the powerset of $X$). Roughly speaking, the mass of $A$, $m(A)$, is supposed to capture the degree of support the agent has for $A$ as opposed to any other subset of $X$ (including subsets or supersets of $A$). Given a mass function $m$ one defines a \defin{belief function} $\Bel_{m} \colon 2^{X} \to [0,1]$ by setting
$$\Bel_{m}(A) = \sum_{B \subseteq A} m(B);$$
in other words, $\Bel_{m}$ assigns to $A$ the sum of the masses of all the subsets of $A$. This is reminiscent of our definition of $\Belp_{\psi}(A)$, which assesses the probability that $\psi$ entails $A$ ranging over the various possible interpretations of $\psi$---each of which is a subset $\psi(x) \subseteq X$. Indeed, we might think of the ``mass'' of $\psi(x)$ as being given by the (conditional) probability that it is the \textit{correct} interpretation of $\psi$.

This can be made precise. Given a probability measure $\pi$ on a finite set $X$ and a formula $\psi \in \L$ with variable valuation $\psi \colon X \to 2^{X}$, define $m^{\pi}_{\psi}(A) = \pi(\psi^{-1}(A) \, | \, \val{\psi})$.

\begin{proposition}
$m^{\pi}_{\psi}$ is a mass function.
\end{proposition}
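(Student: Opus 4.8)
The plan is to verify directly the two defining conditions of a mass function — that $m^{\pi}_{\psi}(\emptyset) = 0$ and that $\sum_{A \subseteq X} m^{\pi}_{\psi}(A) = 1$ — together with the (immediate) fact that its values lie in $[0,1]$. Throughout I assume $\pi(\val{\psi}) \neq 0$, as is needed for the conditional probabilities to be defined (this is the standing convention for conditioning in the paper); when this fails the construction simply does not apply.

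For the first condition, I would observe that the preimage $\psi^{-1}(\emptyset) = \{x \in X : \psi(x) = \emptyset\}$ is disjoint from $\val{\psi}$. Indeed, if $\psi(x) = \emptyset$ then in particular $x \notin \psi(x)$, so by the definition of the truth set $x \notin \val{\psi}$; an ``empty interpretation'' can never be the correct one at the world where it is entertained. Hence $\psi^{-1}(\emptyset) \cap \val{\psi} = \emptyset$, and therefore $m^{\pi}_{\psi}(\emptyset) = \pi(\psi^{-1}(\emptyset) \mid \val{\psi}) = 0$. That each $m^{\pi}_{\psi}(A)$ lies in $[0,1]$ is automatic, since it is a conditional probability.

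For the normalization condition, the key point is that the family $\{\psi^{-1}(A)\}_{A \subseteq X}$ is a partition of $X$: because $\psi$ is a function $X \to 2^{X}$, each state $x$ lies in exactly one of these preimages, namely the one indexed by $A = \psi(x)$. Since $X$ is finite, $2^{X}$ is finite, so this is a finite partition, and finite additivity of the conditional measure $\pi(\cdot \mid \val{\psi})$ gives
$$\sum_{A \subseteq X} m^{\pi}_{\psi}(A) = \sum_{A \subseteq X} \pi(\psi^{-1}(A) \mid \val{\psi}) = \pi\Bigl(\bigcup_{A \subseteq X} \psi^{-1}(A) \;\Bigm|\; \val{\psi}\Bigr) = \pi(X \mid \val{\psi}) = 1.$$

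There is no substantive obstacle here; the only things worth stating carefully are the well-definedness caveat ($\pi(\val{\psi}) > 0$) and the small observation that $\psi(x) = \emptyset$ forces $x \notin \val{\psi}$, which is precisely what pins $m^{\pi}_{\psi}(\emptyset)$ to $0$ rather than merely to a nonnegative value. Everything else is routine bookkeeping with finite additivity over the partition of $X$ induced by $\psi$.
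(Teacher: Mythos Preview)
Your proof is correct and follows essentially the same route as the paper's: the observation that $\psi(x)=\emptyset$ forces $x\notin\val{\psi}$ to get $m^{\pi}_{\psi}(\emptyset)=0$, and the partition $\{\psi^{-1}(A)\}_{A\subseteq X}$ of $X$ together with additivity of $\pi(\cdot\mid\val{\psi})$ to get the sum equal to $1$. Your added remarks on the well-definedness caveat and finiteness are welcome but do not change the argument.
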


\begin{proof}
First observe that $m^{\pi}_{\psi}(\emptyset) = \pi(\psi^{-1}(\emptyset) \, | \, \val{\psi}) = 0$, since if $\psi(x) = \emptyset$ then $x \notin \val{\psi}$. Next we note that the collection $\{\psi^{-1}(A) \: : \: A \subseteq X\}$ is a partition of $X$, and thus
\begin{eqnarray*}
\sum_{A \subseteq X} m^{\pi}_{\psi}(A) & = & \sum_{A \subseteq X} \pi(\psi^{-1}(A) \, | \, \val{\psi})\\
& = & \pi\big(\bigcup_{A \subseteq X} \psi^{-1}(A) \, \big | \, \val{\psi}\big)\\
& = & \pi(X \, | \, \val{\psi})\\
& = & 1. \qedhere
\end{eqnarray*}
\end{proof}

\begin{proposition}
$\Bel_{m^{\pi}_{\psi}} = \Belp_{\psi}$.
\end{proposition}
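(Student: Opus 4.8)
The plan is to unwind both sides of the claimed identity $\Bel_{m^{\pi}_{\psi}} = \Belp_{\psi}$ directly from their definitions and observe they compute the same conditional probability. Fix $A \subseteq X$. On the left, by definition of the belief function associated to a mass function,
$$\Bel_{m^{\pi}_{\psi}}(A) = \sum_{B \subseteq A} m^{\pi}_{\psi}(B) = \sum_{B \subseteq A} \pi(\psi^{-1}(B) \mid \val{\psi}).$$
The key observation is that the sets $\psi^{-1}(B)$ for distinct $B$ are pairwise disjoint (as noted in the preceding proof, $\{\psi^{-1}(B) : B \subseteq X\}$ partitions $X$), so this sum of conditional probabilities equals $\pi\big(\bigcup_{B \subseteq A} \psi^{-1}(B) \mid \val{\psi}\big)$.

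Next I would identify that union. A state $x$ lies in $\bigcup_{B \subseteq A} \psi^{-1}(B)$ exactly when $\psi(x) = B$ for some $B \subseteq A$, i.e., exactly when $\psi(x) \subseteq A$. Since $a$ is the constant valuation $a(x) \equiv A$, the condition $\psi(x) \subseteq A$ is literally the condition $\psi(x) \subseteq a(x)$, which is the defining condition for $x \in \val{\psi \implies a}$. Hence $\bigcup_{B \subseteq A} \psi^{-1}(B) = \val{\psi \implies a}$, and therefore
$$\Bel_{m^{\pi}_{\psi}}(A) = \pi(\val{\psi \implies a} \mid \val{\psi}) = \Belp_{\psi}(A),$$
the last equality being the definition of $\Belp_{\psi}(A)$. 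Since $A$ was arbitrary, the two functions agree.

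I do not expect any serious obstacle here; the proof is essentially a bookkeeping exercise, and the only point requiring a moment's care is the interchange of the finite sum with the conditional-probability measure, which is justified precisely by the disjointness of the preimages $\psi^{-1}(B)$ — the same fact already invoked in the proof that $m^{\pi}_{\psi}$ is a mass function. One should also note in passing that if $\pi(\val{\psi}) = 0$ the conditional probabilities are undefined, so the identity is understood to hold whenever both sides make sense (equivalently, this case is tacitly excluded, as it is throughout the section). It may be worth remarking that this result, combined with Proposition \ref{pro:bel}, simply re-derives the standard fact that $\Bel_m$ is superadditive, normalized, and monotone; but the substantive content is the representation itself, showing that every $\Belp_{\psi}$ arising in this framework is a genuine Dempster-Shafer belief function with an explicitly given mass assignment.
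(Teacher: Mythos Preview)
Your proof is correct and follows essentially the same approach as the paper: unfold the definition of $\Bel_{m^{\pi}_{\psi}}(A)$, use disjointness of the preimages $\psi^{-1}(B)$ to rewrite the sum as the conditional probability of a union, identify that union with $\val{\psi \implies a}$, and conclude. Your added remarks on the $\pi(\val{\psi}) = 0$ caveat and the connection to Proposition~\ref{pro:bel} are fine supplementary commentary but not part of the paper's proof.
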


\begin{proof}
For any $A \subseteq X$, we have
\begin{eqnarray*}
\Bel_{m^{\pi}_{\psi}}(A) & = & \sum_{B \subseteq A} m^{\pi}_{\psi}(A)\\
& = & \sum_{B \subseteq A} \pi(\psi^{-1}(B) \, | \, \val{\psi})\\
& = & \pi\big(\bigcup_{B \subseteq A} \psi^{-1}(B) \, \big | \, \val{\psi}\big)\\
& = & \pi(\{x \in X \: : \: \psi(x) \subseteq A\} \, | \, \val{\psi})\\
& = & \pi(\val{\psi \implies a} \, | \, \val{\psi})\\
& = & \Belp_{\psi}(A). \qedhere
\end{eqnarray*}
\end{proof}

Thus, variable valuations provide a framework in which Dempster-Shafer belief functions arise naturally from probability measures defined on the same state space and implement a comparatively more conservative (in the sense of Proposition \ref{pro:bel}(a)) notion of ``evidentially supported'' belief.

We conclude by noting that the additional structure of our setting also allows for a novel way of \textit{combining} belief functions---one not available in traditional Dempster-Shafer theory. Roughly speaking, this is because our mass functions $m^{\pi}_{\psi}$ are defined on subsets of $X$ which themselves are associated with points in $X$ (via $\psi$); this serves as a kind of background ``indexing'' that gives us a pointwise way of merging evidence that is simply absent in the standard framework.

\emph{Dempster's rule of combination} tells us how to combine two mass functions (and thus produce a new belief function): given $m_{1}$ and $m_{2}$, we set $(m_{1} \oplus m_{2})(\emptyset) = 0$ and for each $A \neq \emptyset$ define
$$(m_{1} \oplus m_{2})(A) = \frac{\displaystyle \sum_{A_{1} \cap A_{2} = A} m_{1}(A_{1}) \cdot m_{2}(A_{2})}{\displaystyle \sum_{B_{1} \cap B_{2} \neq \emptyset} m_{1}(B_{1}) \cdot m_{2}(B_{2})}.$$
Essentially, this definition just multiplies together the masses of any two sets that overlap, adds that product to the new mass of the intersection, and normalizes the result. Importantly, according to this formula, \textit{any} two sets that overlap contribute part of their mass to the new mass function.

By contrast, in our setting we can define the combination of $m^{\pi}_{\psi_{1}}$ and $m^{\pi}_{\psi_{2}}$ to be the mass function $m^{\pi}_{\psi_{1} \land \psi_{2}}$. Recall that $(\psi_{1} \cap \psi_{2})(x) = \psi_{1}(x) \cap \psi_{2}(x)$, so on this definition, the only intersections that count towards the new mass function are those of sets that are associated with the same underlying world $x \in X$. The investigation of this new paradigm for evidence combination and its broader import for epistemological questions is the subject of ongoing research.

\bibliographystyle{plain}

\begin{thebibliography}{12}
\providecommand{\natexlab}[1]{#1}
\providecommand{\url}[1]{\texttt{#1}}
\expandafter\ifx\csname urlstyle\endcsname\relax
  \providecommand{\doi}[1]{doi: #1}\else
  \providecommand{\doi}{doi: \begingroup \urlstyle{rm}\Url}\fi



\bibitem{BO19}
A.~Bjorndahl and A.~\"Ozg\"un. 2019.
\newblock ``Uncertainty about Evidence.''
\newblock In \emph{Proceedings of the 17th conference on Theoretical Aspects of
  Rationality and Knowledge (TARK)},
  edited by Lawrence S.~Moss, 68--81. arXiv.
  
\bibitem{B20}
A.~Bjorndahl. 2020.
\newblock ``Knowledge Second.''
\newblock \emph{Res Philosophica} 97(4), 471--487.

\bibitem{H03}
J.~Y.~Halpern. 2003.
\newblock \emph{Reasoning About Uncertainty.}
\newblock MIT Press.



\end{thebibliography}

\end{document}